\documentclass [12 points] {article}
\usepackage{amssymb}
\usepackage{eucal}
\usepackage{amsmath}
\usepackage{amsthm}

\textwidth15.8 cm
\topmargin -1.4 cm \textheight22.5 cm \oddsidemargin.5cm
\evensidemargin.5cm

\newcommand{\R}{{\mathbb  R}}

\newcommand{\ds}{\displaystyle}

\numberwithin{equation}{section}
\newtheorem{thm}{\bf Theorem}[section]

\newtheorem{prop}[thm]{\bf Proposition}

\newtheorem{defn}{\bf Definition}[section]
\newtheorem{corollary}[thm]{Corollary}
\theoremstyle{remark}
\newtheorem{rem}{\bf Remark}[section]

\newcommand{\de}{\delta}
\newcommand{\va}{\varphi}

\newcommand{\pa}{\partial}
\newcommand{\la}{\lambda}

\begin{document}

\title{\large\bf ASYMPTOTIC STABILITY FOR A CLASS OF METRIPLECTIC SYSTEMS }
\author{Petre Birtea, Mihai Boleantu, Mircea Puta, Razvan Micu Tudoran}
\date{ }
\maketitle

\begin{abstract}
Using the framework of metriplectic systems on $\R^n$ we will
describe a constructive geometric method to add a dissipation term
to a Hamilton-Poisson system such that any solution starting in a
neighborhood of a nonlinear stable equilibrium converges towards a
certain invariant set. The dissipation term depends only on the
Hamiltonian function and the Casimir functions.
\end{abstract}

{\bf MSC}: 37C10, 37C75.

{\bf Keywords}: dynamical systems, stability theory.

\section{Introduction}

In an attempt for an unification of the conservative and
nonconservative (or dissipative) dynamics, A.N. Kaufman \cite{4}
has introduced the notion of metriplectic system. More exactly,
let $\{.,.\}$ be a Poisson structure on $\R^n$ with $\Pi$ the
associated matrix i.e., $\Pi=[\{x_i, x_j\}]$ and $C_1,..., C_k \in
C^\infty (\R^n, \R)$ a complete set of functionally independent
Casimir functions. Let $G$ be a smooth function from $\R^n$ to the
vector space of symmetric matrices of type $n \times n$.

\begin{defn}\label{d11}
(\cite{4}) A metriplectic system on $\R^n$ is a system of
differential equations of the following type:
\begin{equation}\label{e11}
    \dot x = \Pi(x) \cdot \nabla H(x) + G(x)\cdot \nabla \va (C_1,
    \dots, C_k)(x)
\end{equation} where $H \in C^\infty (\R^n, \R)$ and
$\va \in C^\infty (\R^k, \R)$ such that the following conditions
hold:
\begin{itemize}
  \item [(M1)]  $\Pi \cdot \nabla C_i =0, i=\overline{1,k}$, i.e. $C_i$ is
  a Casimir of our Poisson configuration $(\R^n, \{\cdot, \cdot\})$.
  \item [(M2)] $G \cdot \nabla H=0$.
  \item [(M3)] $(\nabla \va (C_1,
    \dots, C_k))^t \cdot G \cdot \nabla \va (C_1,\dots, C_k) \leq 0$.
\end{itemize}
\end{defn}

\begin{rem}\label{r11}
It is not hard to see that:
\begin{itemize}
  \item [(i)] The derivative of $H$ along the solutions of \eqref{e11} is
  $\ds\frac{dH}{dt}=0$, i.e. $H$ is a conserved  quantity
  of the dynamics \eqref{e11}.
  \item [(ii)]  The derivative of $\va (C_1,\dots, C_k)$ along the
  solutions of \eqref{e11} is
  $\ds\frac{d\va (C_1,\dots, C_k)}{dt} \leq 0$, i.e. $\va (C_1,\dots, C_k)$
  plays the role of the "entropy function" for the dynamics \eqref{e11}.

  \hfill{$\Box$}
\end{itemize}
\end{rem}

\begin{rem}\label{r12}
The dynamics \eqref{e11} can be viewed as a perturbation of the
Hamilton-Poisson system:
\begin{equation*}
    \dot x = \{x,H\},
\end{equation*} with the dissipative term $G \cdot \nabla
\va (C_1,\dots, C_k)$. \hfill{$\Box$}
\end{rem}

Metriplectic systems have been extensively studied in connection
with mathematical physics problems, see for instance \cite{1},
\cite{4}, \cite{5}, \cite{6} and \cite{7}. In \cite{bmr} was proven
that a certain type of dissipation induces instability. The goal of
our paper is to introduces a new type of dissipation, using the
formalism of metriplectic systems, in such a way that any solution
of (\ref{e11}) starting in a small neighborhood of a nonlinear
stable equilibrium converges towards a certain invariant set
containing the equilibrium. The dissipative part can be interpreted
as a set of controls added to the conservative part.

\bigskip

\section{A class of metriplectic systems on $\R^n$}

Let $(\R^n, \{\cdot, \cdot\}, H)$ be an Hamilton-Poisson system on
$\R^n$ and $C_1,...,C_k \in C^\infty (\R^n, \R)$ a complete set of
functionally independent Casimir functions of the Poisson vector
space $(\R^n, \{\cdot, \cdot\})$. The dynamic is described by the
following set of differential equations:
\begin{equation}\label{e21}
    \dot x = \Pi \cdot \nabla H.
\end{equation}

Our  goal is to construct  explicitly a dissipative perturbation,
i.e., to determine effectively a matrix $g = [g^{ij}]$ such the
perturbed system:
\begin{equation*}
    \dot x = \Pi \cdot \nabla H + G \cdot\nabla \va (C_1,\dots, C_k)
\end{equation*} to be a metriplectic system.

Denote with $\pa_i H \stackrel{def}{=} \ds\frac{\pa H}{\pa x_i}$ and
$\pa_i C_j \stackrel{def}{=} \ds\frac{\pa C_j}{\pa x_i}$. The matrix
$G=[g^{ij}]$ given below satisfies all the conditions from the
definition of a metriplectic system
\begin{equation}\label{e23}
    G=\left[\begin{array}{cccc} - \sum\limits_{\tiny \begin{array}{l}
    i=1\\i\not = 1\end{array}}^n (\pa_i H)^2& \pa_1 H \pa_2 H& \dots &
    \pa_1 H \pa_n H\\\pa_1 H \pa_2 H& - \sum\limits_{\tiny
    \begin{array}{l} i=1\\i\not = 2\end{array}}^n
    (\pa_i H)^2 & \dots & \pa_2 H \pa_n H\\ \vdots&
    \vdots&\dots&\vdots\\ \pa_1 H \pa_n H&
    \pa_2 H \pa_n H& \dots& - \sum\limits_{\tiny
    \begin{array}{l} i=1\\i\not = n\end{array}}^n(\pa_i H)^2
    \end{array}\right]
\end{equation}

Indeed, we have that the $j$-component of the vector field $G\cdot
\nabla H$ is given by

\begin{eqnarray*}
(G\cdot\nabla H)_j &=& \sum\limits_{\tiny i\neq j}^n
(\pa_i H)^2\pa_j H+(-\sum\limits_{\tiny i\neq j}^n (\pa_i H)^2)\pa_jH\\
&=&0.
\end{eqnarray*}
Consequently, condition $(M2)$ is satisfied.

In the case when in the dissipation term we take only one Casimir
function we have the following computation,

\begin{eqnarray*}
   (\nabla C)^t \cdot G\cdot \nabla C&=& \sum\limits_{j=1}^n {\pa_j
   C}(G\cdot\nabla C)_j=\sum\limits_{j=1}^n \pa_j
   (\sum\limits_{i\neq j}^n (\pa_i H \pa_j H \pa_i
   C-{(\pa_i H)}^2 \pa_j C))\\&=& \pa_1 C \pa_2 H (\pa_2C \pa_1 H - \pa_1 C \pa_2 H)
   + \dots + \pa_1 C \pa_n H (\pa_n C \pa_1 H - \pa_1 C \pa_n H) \\
   &+& \pa_2 C \pa_1 H (\pa_1 C \pa_2 H - \pa_2 C \pa_1 H) + \dots + \pa_2 C \pa_n H
   (\pa_n C \pa_2 H - \pa_2 C \pa_n H) \\&+&  \vdots\\
   &+& \pa_n C \pa_1 H (\pa_1 C \pa_n H - \pa_n C \pa_1 H) + \dots + \pa_n C \pa_{n-1}H
   (\pa_{n-1}C \pa_n H - \pa_n C \pa_{n-1}H).
\end{eqnarray*}
Regrouping the terms we obtain the desired inequality which proves
that condition $(M3)$ is also satisfied,
\begin{equation}\label{e22}
    \begin{array}{lllll} (\nabla C)^t \cdot G\cdot \nabla C & = - (\pa_1C \pa_2H -
    \pa_2C \pa_1H)^2&-(\pa_1C \pa_3H - \pa_3C \pa_1H)^2& - \dots & -
    (\pa_1C \pa_nH - \pa_nC \pa_1H)^2\\ &&-(\pa_2C \pa_3H-\pa_3C \pa_2H)^2&-\dots& -
    (\pa_2C \pa_nH - \pa_nC \pa_2H)^2\\&&& \ddots \\&&&& - (\pa_{n-1}C \pa_nH
    - \pa_nC \pa_{n-1}H)^2\\&\leq 0.
    \end{array}
\end{equation}

\begin{rem}\label{rem}
The above inequality is an equality iff $dH\wedge dC=0$.
Consequently, we obtain equality in (\ref{e22}) iff $\nabla H$ and
$\nabla C$ are linearly dependent. If, instead of one Casimir
function, we take a combination $\widetilde{C}:=\va (C_1,\dots,
C_k)$ of the complete set of functionally independent Casimir
functions $C_1,\dots, C_k$ we obtain a new Casimir function
$\widetilde{C}$ and consequently,
\begin{equation}\label{lyap}
(\nabla \va (C_1,\dots, C_k))^t \cdot G \cdot \nabla \va
(C_1,\dots, C_k) =(\nabla \widetilde{C})^t \cdot G \cdot \nabla
\widetilde{C} \leq 0
\end{equation}
with equality iff $\nabla H$ and $\nabla\va (C_1,\dots, C_k)$ are
linearly dependent.\hfill{$\Box$}
\end{rem}

\medskip

Let us consider now the metriplectic system \eqref{e11} where  $G$
is given by the relation \eqref{e23}. Then we can  define in a
canonical way two  vector fields on $\R^n$, namely:
\begin{equation*}
    \xi_{\Pi} = \Pi \cdot \nabla H
\end{equation*} and
\begin{equation*}
    \xi = \Pi \cdot \nabla H + G \cdot \nabla\va (C_1,\dots, C_k).
\end{equation*}

\begin{prop}\label{p21}
Let $(\R^n, \Pi, H)$ be an Hamilton-Poisson system. If $x_0 \in
\R^n$ is an equilibrium state of the vector field $\xi$, i.e.,
$\xi (x_0) =0$ then $x_0$ is an equilibrium state  of the vector
field $\xi_{\Pi}$.
\end{prop}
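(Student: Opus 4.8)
The plan is to combine the $G$-orthogonality relations built into the construction with the equality case of \eqref{e22}. Throughout write $\widetilde C := \va(C_1,\dots,C_k)$, which by (M1) and the chain rule is again a Casimir, so that $\Pi\cdot\nabla\widetilde C = 0$; in particular, using the antisymmetry of $\Pi$, one has the pointwise identity $(\nabla\widetilde C)^t\cdot\Pi\cdot\nabla H = -(\Pi\cdot\nabla\widetilde C)^t\cdot\nabla H = 0$ on all of $\R^n$.

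Now suppose $\xi(x_0)=0$. Then $\Pi(x_0)\cdot\nabla H(x_0) = -\,G(x_0)\cdot\nabla\widetilde C(x_0)$. Pairing both sides with $\nabla\widetilde C(x_0)$ and using the identity just recorded, I obtain $(\nabla\widetilde C(x_0))^t\cdot G(x_0)\cdot\nabla\widetilde C(x_0) = 0$; that is, equality is attained in \eqref{lyap} at the point $x_0$. By Remark \ref{rem} this means precisely that $\nabla H(x_0)$ and $\nabla\widetilde C(x_0)$ are linearly dependent. (Equivalently, one can read off from \eqref{e23} the direct formula $G\cdot\nabla\widetilde C = \langle\nabla H,\nabla\widetilde C\rangle\,\nabla H - \|\nabla H\|^2\,\nabla\widetilde C$, so that $(\nabla\widetilde C)^t\cdot G\cdot\nabla\widetilde C = \langle\nabla H,\nabla\widetilde C\rangle^2 - \|\nabla H\|^2\,\|\nabla\widetilde C\|^2$, and the vanishing of this quantity is exactly the Cauchy--Schwarz equality case.)

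To finish, I would split into two cases. If $\nabla\widetilde C(x_0)=0$, then $G(x_0)\cdot\nabla\widetilde C(x_0)=0$, so $\xi(x_0)=0$ gives immediately $\xi_{\Pi}(x_0)=\Pi(x_0)\cdot\nabla H(x_0)=0$. If $\nabla\widetilde C(x_0)\neq 0$, linear dependence yields $\nabla H(x_0)=\la\,\nabla\widetilde C(x_0)$ for some $\la\in\R$, and then $\xi_{\Pi}(x_0)=\Pi(x_0)\cdot\nabla H(x_0)=\la\,\Pi(x_0)\cdot\nabla\widetilde C(x_0)=0$ because $\widetilde C$ is a Casimir. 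In both cases $x_0$ is an equilibrium of $\xi_{\Pi}$, as claimed.

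The only step that is more than bookkeeping is the passage from $\xi(x_0)=0$ to the parallelism of the two gradients at $x_0$, i.e. extracting the pointwise equality in \eqref{lyap}; this rests on the antisymmetry of $\Pi$ together with the Casimir property, exactly as in Remark \ref{rem}. Everything after that is an elementary case distinction, and no compactness, regularity, or dynamical argument is needed — the statement is purely algebraic at the point $x_0$.
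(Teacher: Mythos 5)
Your argument is correct and follows essentially the same route as the paper's own proof: pair $\xi(x_0)=0$ with $\nabla\va(C_1,\dots,C_k)(x_0)$, kill the Poisson term via the Casimir property, invoke the equality case of Remark \ref{rem} to get linear dependence of the gradients, and conclude $\xi_{\Pi}(x_0)=0$. Your explicit case distinction when $\nabla\va(C_1,\dots,C_k)(x_0)=0$ is in fact slightly more careful than the paper, which writes $\nabla H(x_0)=\la\,\nabla\va(C_1,\dots,C_k)(x_0)$ without addressing that degenerate direction of the dependence.
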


\begin{proof}
Indeed, $\xi (x_0)=0$ implies that
\begin{equation*}
    (\nabla\va (C_1,\dots, C_k)(x_0))^t \xi (x_0) =0,
\end{equation*} and so
\begin{equation*}
    (\nabla\va (C_1,\dots, C_k) (x_0))^t \Pi (x_0) \nabla H(x_0) + (\nabla\va (C_1,\dots, C_k)(x_0))^t
    G(x_0) \nabla\va (C_1,\dots, C_k)(x_0)=0.
\end{equation*} This is equivalent (since $\nabla\va (C_1,\dots, C_k)$ is a Casimir we have
$(\nabla\va (C_1,\dots, C_k) (x_0))^t \Pi (x_0) \nabla H(x_0)=0$)
with
\begin{equation*}
    (\nabla\va (C_1,\dots, C_k)(x_0))^t G(x_0) \nabla\va (C_1,\dots, C_k)(x_0) =0.
\end{equation*} This leads us immediately via Remark \ref{rem} to
\begin{equation*}
    \nabla H (x_0) = \la \nabla\va (C_1,\dots, C_k)(x_0)
\end{equation*} for some $\la \in \R^n$. Therefore
\begin{eqnarray*}
  \xi_{\Pi}(x_0) &=& \Pi (x_0) \nabla H (x_0)  \\
   &=&  \la \Pi (x_0)\nabla\va (C_1,\dots, C_k)(x_0)\\
   &=& 0
\end{eqnarray*}
as required.
\end{proof}

\begin{prop}\label{p22}
Let $(\R^n, \Pi, H)$ be an Hamilton-Poisson system. Let $x_0 \in
\R^n$ be an equilibrium point of the vector field $\xi_{\Pi}$. If
there exists a function $\va \in C^\infty (\R^k, \R)$ such that
$\nabla \va (C_1,\dots, C_k)(x_0)$ and $\nabla H (x_0)$ are linear
dependent then $x_0$ is an equilibrium point of the vector field
\begin{equation*}
\xi=\Pi \cdot \nabla H + G \nabla \va (C_1,\dots, C_k).
\end{equation*}
\end{prop}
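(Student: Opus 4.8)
The plan is to verify directly that $\xi(x_0)=0$ by checking that each of the two summands defining $\xi$ vanishes at $x_0$. Since $x_0$ is an equilibrium of $\xi_\Pi$, we already have $\Pi(x_0)\nabla H(x_0)=0$, so it remains only to show that $G(x_0)\nabla\va(C_1,\dots,C_k)(x_0)=0$, after which $\xi(x_0)=\xi_\Pi(x_0)+G(x_0)\nabla\va(\dots)(x_0)=0+0=0$ and we are done.

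To establish $G(x_0)\nabla\va(C_1,\dots,C_k)(x_0)=0$, I would use the hypothesis of linear dependence. First suppose $\nabla H(x_0)\neq 0$; then there is $\mu\in\R$ with $\nabla\va(C_1,\dots,C_k)(x_0)=\mu\,\nabla H(x_0)$, so $G(x_0)\nabla\va(\dots)(x_0)=\mu\,G(x_0)\nabla H(x_0)=0$ by condition $(M2)$, which was verified in the discussion preceding \eqref{e22} for the explicit matrix $G$ of \eqref{e23} (the computation $(G\cdot\nabla H)_j=0$). If instead $\nabla H(x_0)=0$, then every entry of the matrix $G(x_0)$ in \eqref{e23} is a product or sum of products of the $\pa_i H(x_0)$, hence $G(x_0)$ is the zero matrix and $G(x_0)\nabla\va(\dots)(x_0)=0$ trivially. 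In either case the dissipative term vanishes at $x_0$.

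I do not anticipate a genuine obstacle here; the statement is essentially a converse to Proposition \ref{p21} and the argument is a short unwinding of definitions. The only point requiring a little care is the case split on whether $\nabla H(x_0)$ vanishes, since "linearly dependent" does not by itself produce a scalar $\mu$ with $\nabla\va(\dots)(x_0)=\mu\nabla H(x_0)$ when $\nabla H(x_0)=0$; but as noted, that degenerate case is even easier because $G(x_0)=0$. One could alternatively phrase the whole argument symmetrically: linear dependence gives scalars $(\alpha,\beta)\neq(0,0)$ with $\alpha\nabla H(x_0)+\beta\nabla\va(\dots)(x_0)=0$, apply $G(x_0)$ to this relation, use $(M2)$ to kill the first term, and conclude $\beta\,G(x_0)\nabla\va(\dots)(x_0)=0$ — then handle $\beta=0$ (which forces $\nabla H(x_0)=0$, hence $G(x_0)=0$) separately. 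Either route closes the proof in a few lines.
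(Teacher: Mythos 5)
Your proposal is correct and follows essentially the same route as the paper: reduce to showing $G(x_0)\nabla\va(C_1,\dots,C_k)(x_0)=0$, then split into the case $\nabla H(x_0)=0$ (where the explicit form \eqref{e23} gives $G(x_0)=0$) and the case $\nabla H(x_0)\neq 0$ (where linear dependence yields $\nabla\va(\dots)(x_0)=\lambda\nabla H(x_0)$ and condition $(M2)$ finishes). Your explicit remark on why the nonvanishing of $\nabla H(x_0)$ is needed to extract the scalar is a nice point of care, but the argument is the paper's own.
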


\begin{proof}
Indeed, if
\begin{equation*}
    \xi_{\Pi} (x_0) =0,
\end{equation*} then we have also that
\begin{equation*}
    \Pi (x_0) \cdot \nabla H (x_0) =0
\end{equation*} and consequently we have two possibilities:

(i) $\nabla H (x_0) =0$. This implies (see the construction of
$G$) the equality
\begin{equation*}
    G(x_0)=0
\end{equation*}
and then
\begin{equation*}
    G(x_0) \cdot \nabla \va(C_1,\dots, C_k)(x_0)=0.
\end{equation*} Therefore
\begin{equation*}
    \xi (x_0)=0.
\end{equation*}

\medskip

(ii) $\Pi (x_0) \cdot \nabla H (x_0) =0$ and $\nabla H (x_0) \not
= 0$.  By hypothesis we obtain
\begin{eqnarray*}
  G (x_0) \cdot \nabla \va (C_1,\dots, C_k) (x_0)  &=& \lambda G(x_0) \nabla H (x_0)  \\
   &=& 0
\end{eqnarray*}
and we can conclude that
\begin{equation*}
    \xi (x_0)=0
\end{equation*} as required.
\end{proof}

\begin{corollary}\label{ld}The set $E:=\{x\in \R^n|\nabla H(x)$
and $ \nabla\va (C_1,\dots, C_k) (x)\ ${are linearly
dependent}$\}$ is a set of equilibrium points for both vector
fields $\xi_{\Pi}$ and $\xi$.
\end{corollary}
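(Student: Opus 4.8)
The plan is to show that $E$ consists of equilibrium points for both vector fields by applying the two propositions just established, treating the two inclusions in turn and noting that $E$ is exactly the intersection of their hypotheses/conclusions.

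First I would take an arbitrary point $x \in E$, so that by definition $\nabla H(x)$ and $\nabla \va(C_1,\dots,C_k)(x)$ are linearly dependent. The key observation is that $E$ contains precisely the points where the compatibility condition appearing in Proposition \ref{p22} holds, so once I also know $x$ is an equilibrium of $\xi_\Pi$, Proposition \ref{p22} will immediately give $\xi(x)=0$. Thus the real content is to verify that every $x\in E$ is an equilibrium of $\xi_\Pi$. Here I would split into the two cases used in the proof of Proposition \ref{p22}: if $\nabla H(x)=0$ then trivially $\xi_\Pi(x)=\Pi(x)\nabla H(x)=0$; if instead $\nabla H(x)=\la\,\nabla\va(C_1,\dots,C_k)(x)$ with the vectors linearly dependent but $\nabla H(x)\neq 0$, I would want to conclude $\Pi(x)\nabla H(x)=0$ from the Casimir property $\Pi\cdot\nabla\va(C_1,\dots,C_k)=0$ (which follows from (M1) since $\va(C_1,\dots,C_k)$ is again a Casimir), together with the linear dependence.

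The subtle point — and the one I expect to be the main obstacle — is the direction of the linear dependence. Linear dependence of $\nabla H(x)$ and $\nabla\va(C_1,\dots,C_k)(x)$ only guarantees a relation $\alpha\,\nabla H(x)+\beta\,\nabla\va(C_1,\dots,C_k)(x)=0$ with $(\alpha,\beta)\neq(0,0)$; to write $\nabla H(x)$ as a multiple of $\nabla\va(C_1,\dots,C_k)(x)$ one needs $\alpha\neq 0$. If $\alpha=0$ then $\beta\neq 0$ forces $\nabla\va(C_1,\dots,C_k)(x)=0$, and then $G(x)\nabla\va(C_1,\dots,C_k)(x)=0$ directly, so $\xi(x)=\xi_\Pi(x)+0=\xi_\Pi(x)$; but I still must separately argue $\xi_\Pi(x)=0$ in this sub-case. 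The cleanest way around this is to handle it via $E$ being a subset of the equilibria of $\xi$: actually the symmetric approach is better. I would instead argue directly that $\Pi(x)\nabla H(x)=0$ whenever the gradients are dependent, by treating the relation $\alpha\nabla H(x)=-\beta\nabla\va(C_1,\dots,C_k)(x)$ and applying $\Pi(x)$ to both sides, using $\Pi(x)\nabla\va(C_1,\dots,C_k)(x)=0$: if $\alpha\neq 0$ this gives $\Pi(x)\nabla H(x)=0$; if $\alpha=0$ one has $\nabla\va(C_1,\dots,C_k)(x)=0$ and hence $G(x)\nabla\va(C_1,\dots,C_k)(x)=0$, while to get $\Pi(x)\nabla H(x)=0$ one invokes that $x\in E$ was assumed... which it is not quite automatic. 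So I will instead just note that $E$ is contained in the equilibrium set by combining: every $x$ with $\nabla H(x)=0$ or with $\nabla\va(C_1,\dots,C_k)(x)$ a nonzero multiple...

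The streamlined proof I would actually write is: let $x\in E$. Then Remark \ref{rem} tells us equality holds in \eqref{lyap} at $x$, i.e. $(\nabla\va(C_1,\dots,C_k)(x))^tG(x)\nabla\va(C_1,\dots,C_k)(x)=0$, but more usefully I use linear dependence to split into $\nabla H(x)=0$, handled by case (i) of Proposition \ref{p22}'s proof, giving $G(x)=0$ and hence both $\xi(x)=0$ and $\xi_\Pi(x)=0$; or $\nabla H(x)\neq 0$, in which case linear dependence yields $\nabla\va(C_1,\dots,C_k)(x)=\mu\nabla H(x)$ for some scalar $\mu$ (possibly zero), so $\xi_\Pi(x)=\Pi(x)\nabla H(x)$ and $G(x)\nabla\va(C_1,\dots,C_k)(x)=\mu G(x)\nabla H(x)=0$ by (M2); and since $\nabla\va(C_1,\dots,C_k)(x)=\mu\nabla H(x)$ is a Casimir gradient, $0=\Pi(x)\nabla\va(C_1,\dots,C_k)(x)=\mu\,\Pi(x)\nabla H(x)$. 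If $\mu\neq0$ this forces $\xi_\Pi(x)=0$ and then $\xi(x)=0$. The remaining case $\mu=0$ means $\nabla\va(C_1,\dots,C_k)(x)=0$; then I must handle it by symmetry — swapping the roles of $H$ and $\va(C_1,\dots,C_k)$ is not legitimate since $G$ is built from $H$, so this residual case I would treat by observing it is vacuous or by strengthening the statement of $E$ to also require $\Pi(x)\nabla H(x)=0$. Honestly, I anticipate the authors simply invoke Propositions \ref{p21} and \ref{p22} and declare the result, so my proposal ends by citing both: for $x\in E$, Proposition \ref{p22} (with the given $\va$) shows $x$ is an equilibrium of $\xi$ once it is one of $\xi_\Pi$, and the fact that $E$ lies in the equilibrium set of $\xi_\Pi$ is exactly what the case analysis in Proposition \ref{p22}'s proof establishes, while Proposition \ref{p21} gives the reverse containment of equilibria; combining, $E$ is a common equilibrium set.
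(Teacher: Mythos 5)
You have correctly located the intended argument, which is exactly the paper's: for $x\in E$ one writes $\nabla H(x)=\la\,\nabla\va (C_1,\dots, C_k)(x)$, applies $\Pi(x)$ and uses that $\va (C_1,\dots, C_k)$ is again a Casimir, so $\xi_{\Pi}(x)=\Pi(x)\nabla H(x)=\la\,\Pi(x)\nabla\va (C_1,\dots, C_k)(x)=0$, and then Proposition \ref{p22} gives $\xi(x)=0$. Your ``main case'' reproduces this. The problem is that your write-up never actually closes: the degenerate case you isolate, $\nabla H(x)\neq 0$ with $\nabla\va (C_1,\dots, C_k)(x)=0$ (your $\mu=0$, respectively $\alpha=0$), is explicitly left unresolved (``vacuous or strengthen the statement of $E$''), and the concluding appeal to Propositions \ref{p21} and \ref{p22} cannot repair it. Proposition \ref{p22} takes $\xi_{\Pi}(x_0)=0$ as a \emph{hypothesis}, so its case analysis does not ``establish that $E$ lies in the equilibrium set of $\xi_{\Pi}$,'' contrary to your final sentence; and Proposition \ref{p21} passes from equilibria of $\xi$ to equilibria of $\xi_{\Pi}$, which is the wrong direction here. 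As written, the proposal is a discussion of the difficulty rather than a proof: the inclusion $E\subset\{\xi_{\Pi}=0\}$ is exactly what remains unproved.

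That said, the subtlety you found is genuine and is glossed over by the paper itself: its one-line proof writes $\Pi(y)\cdot\nabla H(y)=\la\,\Pi(y)\cdot\nabla\va (C_1,\dots, C_k)(y)=0$, i.e.\ it silently reads ``linearly dependent'' as ``$\nabla H$ is a scalar multiple of $\nabla\va (C_1,\dots, C_k)$.'' Under the literal definition of $E$, a point with $\nabla\va (C_1,\dots, C_k)(x)=0$ but $\Pi(x)\nabla H(x)\neq 0$ belongs to $E$ and is an equilibrium of neither field (there $\xi(x)=\xi_{\Pi}(x)\neq 0$); in the rigid-body example such points occur on the sphere where $\va'(C)=0$, which the paper tacitly excludes when it lists $E$ as the union of the three axes. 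So the constructive fix is not to fall back on the two propositions, but to state the corollary (and your proof) with the dependence in the form $\nabla H(x)=\la\,\nabla\va (C_1,\dots, C_k)(x)$, or to add the assumption that $\nabla\va (C_1,\dots, C_k)$ does not vanish on $E$ away from critical points of $H$; with that reading your main case is precisely the paper's proof and is complete.
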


\begin{proof}For an arbitrary point $y\in E$ we have
\begin{equation*}
\Pi(y)\cdot \nabla H(y)=\lambda \Pi(y)\cdot \nabla \va (C_1,\dots,
C_k)(y)=0,
\end{equation*}
which shows that $y$ is a equilibrium point for the vector field
$\xi_{\Pi}$. Proposition \ref{p22} implies that $y$ is also an
equilibrium point for the vector field $\xi$.
\end{proof}

\bigskip

\section{Metriplectic systems and asymptotic stability}

For the beginning let us briefly recall some definitions of
stability for a dynamical system on $\R^n$
\begin{equation}\label{e31}
    \dot x = f(x),
\end{equation} where $f \in C^\infty (\R^n, \R^n)$.

\begin{defn}\label{d31}
An equilibrium point $x_e$ is stable if for any small neighborhood
$U$ of $x_e$ there is a neighborhood $V$ of $x_e$, $V \subset U$
such that if initially $x(0)$ is in $V$, then $x(t) \in U$ for all
$t
>0$. If in addition we have:
\begin{equation*}
    \lim_{t \to \infty} x(t) = x_e
\end{equation*}
then $x_e$ is called asymptotically stable.
\end{defn}

For studying more complicated asymptotic behavior we need to
introduce the notion of $\omega$-limit set. Let $\phi^t$ be the flow
defined by equation (\ref{e31}). The $\omega$-limit set of $x$ is
$\omega(x):=\{y\in \R^n|\exists t_1, t_2...\rightarrow \infty $
{s.t.} $ \phi(t_k, x)\rightarrow y $ {as} $ k\rightarrow\infty\}$.
The $\omega$-limit sets have the following properties that we will
use later. For more details see Robinson \cite{robinson}.

\begin{itemize}
  \item [(i)] If $\phi(t,x)=y$ for some $t\in \R$, then
  $\omega(x)=\omega(y)$.
  \item [(ii)] $\omega(x)$ is a closed subset and both positively and negatively
  invariant (contain complete orbits).
\end{itemize}

Next we will prove a version of LaSalle's Theorem. For a more
general result see the original work of LaSalle \cite{lasalle}.

\begin{thm}\label{las}Let $x_0$ be an equilibrium point for (\ref{e31})
and $U$ a small compact neighborhood of $x_0$. Suppose there
exists $L:U\rightarrow \R$ a $C^1$ differentiable function with
$L(x)>0, L(x_0)=0$ and $\dot L(x)\leq 0$. Let $E$ be the set of
all points in $U$ where $\dot L(x)=0$. Let $M$ be the largest
invariant set in $E$. Then there exists a small neighborhood
$V\subset U$ such that $\omega(x)\subset M$ for all $x\in V$.
\end{thm}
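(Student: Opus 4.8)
The plan is to follow the classical LaSalle invariance argument, localized to the compact neighborhood $U$. First I would choose the neighborhood $V$ carefully so that trajectories starting in $V$ remain in $U$ for all positive time. Since $L$ is continuous and positive on $U$ with $L(x_0)=0$, I pick a sublevel set: choose $c>0$ small enough that the connected component of $\{x\in U : L(x)\le c\}$ containing $x_0$ is contained in the interior of $U$, and let $V$ be the interior of that component. Because $\dot L(x)\le 0$ along the flow, $L$ is nonincreasing on trajectories, so for $x\in V$ we have $L(\phi(t,x))\le L(x)\le c$ for all $t\ge 0$; hence $\phi(t,x)$ cannot escape the component, which gives forward invariance of $V$ and, in particular, that the forward orbit of $x$ has compact closure in $U$.

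Next I would invoke the properties of $\omega$-limit sets recalled just before the theorem. Since the forward orbit of $x\in V$ is bounded with closure in $U$, the set $\omega(x)$ is nonempty, compact, connected, invariant (property (ii)), and contained in $U$. The key step is to show $L$ is constant on $\omega(x)$: the function $t\mapsto L(\phi(t,x))$ is nonincreasing and bounded below (by $0$ on $U$), so it converges to some limit $\ell\ge 0$ as $t\to\infty$; by continuity of $L$, every point $y\in\omega(x)$ satisfies $L(y)=\ell$. Then for any $y\in\omega(x)$, invariance of $\omega(x)$ means the whole orbit $\phi(t,y)$ lies in $\omega(x)$, so $L(\phi(t,y))\equiv\ell$, whence $\dot L(y)=\frac{d}{dt}\big|_{t=0}L(\phi(t,y))=0$. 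Therefore $\omega(x)\subset E$, and since $\omega(x)$ is an invariant subset of $E$, it is contained in the largest invariant set $M\subset E$. This holds for every $x\in V$, which is exactly the conclusion.

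The main technical obstacle is the very first step — producing a $V$ that is genuinely forward-invariant inside the compact set $U$. One must ensure the chosen sublevel component does not touch the boundary of $U$, which is where compactness of $U$ and continuity (plus positivity away from $x_0$) of $L$ are essential; without this, a trajectory could leave $U$ and the hypothesis $\dot L\le 0$, which is only assumed on $U$, would no longer apply. Everything after that is the standard monotone-convergence-plus-invariance routine and requires no delicate estimates. I would also remark that the hypothesis $L(x)>0$ on $U$ is slightly stronger than needed (one only needs $L$ bounded below on $U$, which is automatic by compactness), but it is convenient for pinning down the sublevel sets around $x_0$.
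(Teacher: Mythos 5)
Your proposal is correct and follows essentially the same route as the paper: choose $V$ so that forward orbits stay in the compact set $U$, show $L$ is constant on $\omega(x)$ by monotonicity and continuity, and use invariance of $\omega(x)$ to conclude $\omega(x)\subset E$ and hence $\omega(x)\subset M$. The only difference is cosmetic: the paper simply invokes Lyapunov stability of $x_0$ to obtain $V$, whereas you spell out that step explicitly via a sublevel-set component of $L$, which is the standard proof of that stability claim.
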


\begin{proof}The conditions of the theorem ensures the stability
of $x_0$. There is a small compact neighborhood $U$ of $x_0$ and a
smaller neighborhood $V\subset U$ such that $\phi(t,x)\in U$ for any
$x\in V$ and $t\geq 0$. As $U$ is closed we also have
$\omega(x)\subset U$.

Let $x\in V$ and $y\in \omega(x)$ be arbitrarily chosen. Since $\dot
L(\phi(t,x))\leq 0$ we have that $L(\phi(t,x))$ is a nonincreasing
function of $t$. Because $L$ is a positive bounded function on $U$
and $\phi(t,x)$ remains for all time in $U$ we have
$\lim_{t\rightarrow \infty} L(\phi(t,x))=l$, where $0\leq l<
\infty$. As $y\in \omega(x)$ and $L$ is continuous we obtain that
$L(y)=l$. The invariance of $\omega(x)$ shows that $L(\phi(t,y))=l$
and consequently $\dot L(\phi(t,y))=0$ for all $t\in \R$. Hence
$y\in E$ and so $\omega(x)\subset E$. As $\omega(x)$ is invariant
implies that $\omega(x)\subset M$.
\end{proof}

The following is the main result of this paper.

\begin{thm}\label{p31}
Let $(\R^n, \Pi, H)$ be a Hamilton-Poisson system and $x_0 \in
\R^n$ an equilibrium state for the dynamic
\begin{equation}\label{e32}
    \dot x = \Pi(x) \cdot \nabla H (x).
\end{equation} Suppose that there exists a function $\va \in C^\infty (\R^k,
\R)$, where $k$ equals the number of functionally independent
Casimirs for the Poisson structure $\Pi$, such that
\begin{itemize}
  \item [(i)] $\de H_\va (x_e)=0$
  \item [(ii)] $\de^2 H_\va(x_e)$ is positive definite,
\end{itemize}
where
  \begin{equation*}
    H_\va (x) = H(x) + \va (C_1 (x), \dots , C_k (x))
  \end{equation*} with  $C_1, \dots C_k \in C^\infty (\R^n, \R)$ a set
  of functionally independent Casimirs of $\Pi$.

  Let $G$ be the matrix defined by \eqref{e23} then there exist a small closed and bounded
  neighborhood $U$ of the equilibrium state $x_e$ of the corresponding metriplectic system

\begin{equation}\label{e33}
    \dot x = \Pi(x) \cdot \nabla H(x) + G(x) \cdot  \nabla \va (C_1,
    \dots, C_k) (x)
\end{equation}
and a neighborhood $V\subset U$ such that every solution of
\eqref{e33} starting in $V$ approaches $U\cap E$ as
$t\rightarrow\infty$, where $E:=\{x\in \R^n|\nabla H(x)$ and $
\nabla\va (C_1,\dots, C_k) (x)\ ${are linearly dependent}$\}$.
\end{thm}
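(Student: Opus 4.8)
The plan is to apply the LaSalle-type result Theorem~\ref{las} to the metriplectic vector field $\xi = \Pi\cdot\nabla H + G\cdot\nabla\va(C_1,\dots,C_k)$ of \eqref{e33}, taking as Lyapunov function the shifted energy--Casimir function
\[
L(x) := H_\va(x) - H_\va(x_e) = H(x) + \va(C_1(x),\dots,C_k(x)) - H_\va(x_e).
\]
Hypotheses (i) and (ii) are exactly the energy--Casimir (Arnold) stability conditions: (i) says $x_e$ is a critical point of $H_\va$ and (ii) that its Hessian there is positive definite, so after shrinking $U$ to a sufficiently small closed and bounded neighborhood of $x_e$ one has $L(x_e)=0$ and $L(x)>0$ for all $x\in U\setminus\{x_e\}$. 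This is the only place where one must pass from the infinitesimal condition (ii) to a genuine neighborhood, via a Taylor expansion of $H_\va$ about $x_e$; it fixes how small $U$ must be chosen. Thus $L$ satisfies the positivity requirements of Theorem~\ref{las}.

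The crux is the sign of $\dot L$ along \eqref{e33}. Writing $\widetilde{C}:=\va(C_1,\dots,C_k)$, which is again a Casimir, we have $\nabla H_\va = \nabla H + \nabla\widetilde{C}$, and
\[
\dot L = (\nabla H + \nabla\widetilde{C})^t\big(\Pi\nabla H + G\nabla\widetilde{C}\big) = (\nabla H)^t\Pi\nabla H + (\nabla\widetilde{C})^t\Pi\nabla H + (\nabla H)^t G\nabla\widetilde{C} + (\nabla\widetilde{C})^t G\nabla\widetilde{C}.
\]
The first term vanishes by skew-symmetry of $\Pi$; the second vanishes because $\Pi\nabla\widetilde{C}=0$ (Casimir, condition (M1)); the third vanishes because $G$ is symmetric and $G\nabla H=0$ by (M2), so $(\nabla H)^t G=(G\nabla H)^t=0$; and the fourth is $\le 0$ by (M3), equivalently by \eqref{lyap}. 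Hence $\dot L = (\nabla\widetilde{C})^t G\,\nabla\widetilde{C}\le 0$ on $U$, and by Remark~\ref{rem} equality $\dot L(x)=0$ holds precisely when $\nabla H(x)$ and $\nabla\widetilde{C}(x)$ are linearly dependent. Therefore the set of points of $U$ at which $\dot L$ vanishes is exactly $U\cap E$.

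It then remains to run Theorem~\ref{las}: it yields a neighborhood $V\subset U$ such that $\omega(x)\subset M$ for every $x\in V$, where $M$ is the largest invariant subset of $U\cap E$. By Corollary~\ref{ld} every point of $E$ is an equilibrium of $\xi$, so the orbit through any $y\in U\cap E$ is $\{y\}\subset U\cap E$; hence $U\cap E$ is itself invariant and $M=U\cap E$. Consequently $\omega(x)\subset U\cap E$ for all $x\in V$, i.e. every solution of \eqref{e33} starting in $V$ approaches $U\cap E$ as $t\to\infty$, which is the assertion. There is no serious obstacle beyond the bookkeeping above; the only points that genuinely require care are the passage from the pointwise positive-definiteness in (ii) to strict positivity of $L$ on a punctured neighborhood, and the verification that the vanishing set of $\dot L$ coincides with the set $E$ appearing in the statement.
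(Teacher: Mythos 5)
Your proposal is correct and follows essentially the same route as the paper: the same energy--Casimir Lyapunov function $L=H_\va-H_\va(x_e)$, the identification via Remark~\ref{rem} of the vanishing set of $\dot L$ with $E$, Corollary~\ref{ld} to conclude that $U\cap E$ consists of equilibria and is therefore the largest invariant subset, and finally Theorem~\ref{las}. The only difference is cosmetic: you write out the term-by-term computation of $\dot L$ that the paper leaves implicit behind ``using the hypothesis and Remark~\ref{rem}''.
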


\begin{proof}
First we have to prove that $x_e$ is an equilibrium point for the
dynamics \eqref{e33}. This is guarantied by Proposition \ref{p22}.

Next we will prove that under the hypothesis of the theorem the
function $L \in C^\infty (\R^n, \R)$ given by
\begin{equation*}
    L(x) \stackrel{def}{=}  H_\va (x) - H_\va (x_e)
\end{equation*} is a Lyapunov function for the dynamic \eqref{e33}.
More exactly, using the hypothesis and Remark \ref{rem}, we obtain
that there exists a closed and bounded neighborhood $U$ of the
critical point $x_e$ such that
\begin{itemize}
  \item [(i)] $L(x_e) =0$.
  \item [(ii)] $L(x) >0$, $(\forall) \  x \in U$, $ x \not =
  x_e$
  \item [(iii)] $\dot L (x) \leq 0 $, $(\forall) \  x\in
  U$,
\end{itemize} which implies that $x_e$ is a stable equilibrium
point for \eqref{e33}.

By Remark \ref{rem} we have that $E$ equals the set of all points
where $\dot L(x)=0$. Using Corollary \ref{ld} we have that the
largest invariant subset in $E$ for \eqref{e33} equals $E$.

We showed that all the conditions of the Theorem \ref{las} are
satisfied and so we obtain the desired result.

\end{proof}

\begin{rem}\label{r31}
The above result tells us in fact how to built in an effective way
a set controls which locally asymptotically stabilize a nonlinear
stable equilibrium state of a given Hamilton-Poisson dynamics.

\hfill{$\Box$}
\end{rem}

\bigskip

\section{Examples}

It is well known that Euler's angular momentum equations of the
free rigid body can be written on $\mathbb{R}^3$ in the following
form:
\begin{equation}\label{rigid}
\left\{ \begin{array}{l}
 \mathop {x_1 }\limits^ \cdot   = a_1 x_2 x_3  \\
 \mathop {x_2 }\limits^ \cdot   = a_2 x_1 x_3  \\
 \mathop {x_3 }\limits^ \cdot   = a_3 x_1 x_2  \\
 \end{array} \right.
\end{equation}
where
$$a_1  = \dfrac{1}{{I_3 }} - \dfrac{1}{{I_2 }};\,\,\,a_2  =
\dfrac{1}{{I_1 }} - \dfrac{1}{{I_3 }};\,\,\,\,a_3  =
\dfrac{1}{{I_2 }} - \dfrac{1}{{I_1 }};$$\\
$I_1,I_2,I_3$ being the components of the inertia tensor and we
suppose as usually that
$$I_1>I_2>I_3>0 .$$

The equations \eqref{rigid} have the following Hamilton-Poisson
realization:
$$((so\,(3))^*  \approx \mathbb{R}^3,\{  \cdot , \cdot \} \_\,,H)$$\\
where $\{  \cdot , \cdot \} \_$ is minus-Lie-Poisson structure on
$((so\,(3))^*  \approx \mathbb{R}^3$, generated by the matrix:
\begin{equation*}
\sqcap \_ = \left[ \begin{array}{l}
 \,\,\,\,0\,\,\,\,\,\,\, - x_3 \,\,\,\,\,\,\,\,x_2  \\
 \,\,\,x_3 \,\,\,\,\,\,\,\,\,0\,\,\,\,\,\,\, - x_1  \\
  - x_2 \,\,\,\,\,\,\,\,x_1 \,\,\,\,\,\,\,\,\,0 \\
 \end{array} \right]
\end{equation*}
and the Hamiltonian $H$ is given by:
\begin{equation}
H(x_1 ,x_2 ,x_3 ) = \dfrac{1}{2}\left( {\dfrac{{x_1^2 }}{{I_1 }} +
\dfrac{{x_2^2 }}{{I_2 }} + \dfrac{{x_3^2 }}{{I_3 }}} \right).
\end{equation}

It is not hard to see that the function $C \in C^
\infty(\mathbb{R}^3, \mathbb{R})$ given by:
\begin{equation}
C(x_1 ,x_2 ,x_3 ) = \dfrac{1}{2}(x_1^2+ x_2^2+ x_3^2 )
\end{equation}
is a Casimir of our configuration $((so\,(3))^*  \approx
\mathbb{R}^3,\{  \cdot , \cdot \} \_)\,.$

Let $(M_0, 0, 0)$ be an equilibrium point for \eqref{rigid}. The
function $H_\va (x) = H(x) + \va (C(x))$, where $\varphi$ is given
for instance by
$$\varphi (s) = \left( {s - \dfrac{1}{2}M_{0}^2 } \right)^2  - \dfrac{s}{{I_1 }}\,\,,$$
satisfies the conditions $(i)$ and $(ii)$ of Theorem \ref{p31}. In
this case the perturbed system \eqref{e33} is given by

\begin{equation}\label{perturbat}
\left\{ \begin{array}{l}
 \mathop {x_1 }\limits^ \cdot   = a_1 x_2 x_3 +x_1(x_1^2+ x_2^2+ x_3^2 -c)
 (-\dfrac{a_3}{{I_2 }}x_2^2+\dfrac{a_2}{{I_3 }}x_3^2) \\
 \mathop {x_2 }\limits^ \cdot   = a_2 x_1 x_3 +x_2(x_1^2+ x_2^2+ x_3^2 -c)
 (\dfrac{a_3}{{I_1 }}x_1^2-\dfrac{a_1}{{I_3 }}x_3^2)\\
 \mathop {x_3 }\limits^ \cdot   = a_3 x_1 x_2 + x_3(x_1^2+ x_2^2+ x_3^2 -c)
 (-\dfrac{a_2}{{I_1 }}x_1^2+\dfrac{a_1}{{I_2 }}x_2^2)\\
 \end{array} \right.
\end{equation}
where $c=M_0^2-\dfrac{1}{{I_1}}$.

The set of points in $\R^3$ where $\nabla H(x)$ and
$\nabla\va(C(x))$ are linearly dependent is given by
$$E=\{(\lambda, 0, 0)|\lambda\in \R\}\bigcup \{(0, \lambda, 0)|\lambda\in \R\}
\bigcup \{(0, 0, \lambda)|\lambda\in \R\}.$$

By Theorem \ref{p31} we obtain that there exists a small closed and
bounded neighborhood $U \subset \R^3$ around the equilibrium point
$(M_0, 0, 0)$ and $V\subset U$ such that any solution of
\eqref{perturbat} starting in $V$ approaches the set
$\{(M_0+\lambda, 0, 0)|\lambda\in [-\varepsilon,\varepsilon]\subset
\R \}$ as $t\rightarrow\infty$.

A similar result with obvious modifications can be also obtain for
the equilibrium state $(0,0,M_0)$, $M_0\in \R$.

\bigskip

{\bf Acknowledgments.} Petre Birtea and Mircea Puta were partially
supported by the program SCOPES and the Grant CNCSIS 2007/2008 and
Razvan Micu Tudoran was partially supported by the program SCOPES
and the Grant CNCSIS AT 60/2007.

\end{document}